\newtheorem{theorem}{Theorem}[section]
\newtheorem{lemma}[theorem]{Lemma}
\theoremstyle{definition}
\newtheorem{example}[theorem]{Example}
\theoremstyle{remark}
\numberwithin{equation}{section}
\def\a{{\alpha}}
\def\e{{\epsilon}}
\def\ep{{\eta}}
\def\l{{\lambda}}
\def\r{{\rho}}
\def\l{{\lambda}}
\def\m{{\mu}}
\def\p{{\psi}}
\def\ttt{{\theta}}
\def\F{{\Phi}}
\def\G{{\Gamma}}
\def\O{{\Omega}}
\def\D{{\Delta}}
\def\cc{{{\mathcal C}}}
\def\kk{{{\mathcal K}}}
\def\hh{{{\mathcal H}}}
\def\nnn{{{\bf n}}}
\def\R{{{\bf R}^1}}
\def\9{{\ \hbox{in}\ \O}}
\def\1{{\ \hbox{on}\ \G_1}}
\def\2{{\ \hbox{on}\ \G_2}}
\def\3{{\ \hbox{on}\ \G_3}}
\def\0{{\ \hbox{on}\ \G}}
\def\pa{{\partial}}
\def\pp{{\parallel}}
\begin{document}

\title[A variational principle for porous metal bearings]{A variational principle for the system of P.D.E. of  porous metal bearings}
\author{Giovanni Cimatti}
\address{Department of Mathematics, Largo Bruno
  Pontecorvo 5, 56127 Pisa Italy}
\email{cimatti@dm.unipi.it}


\subjclass[2010]{49S99, 35Q35}


\dedicatory{To John Ockendon mentor and friend}

\keywords{Lubrication theory, Reynolds's equation, Variational principle }

\begin{abstract}
Porous metal bearings are widely used in small and micro devices. To compute the pressure one has to solve the Reynolds equation coupled with the Laplace equation. We show that it is possible to give to the relevant boundary value problem a variational formulation. We show that  the pressure inside a porous bearing is less then that of the corresponding non-porous bearing.
\end{abstract}

\maketitle

\section{Introduction}
In a bearing made of a porous material  lubricant flow out of the bearing surface. The problem is to solve the Reynolds equation for the pressure in the oil film simultaneously with the Laplace equation for the porous matrix, the link between the two equations is represented by the continuity of the flow, see \cite{PS} page 54, \cite{C} page 548 and \cite{Ci}. In Section 2 of this paper we consider a journal bearing of finite length $L$ with the cross-section of the figure below.

\begin{figure}[h]
\centerline{\epsfig{figure=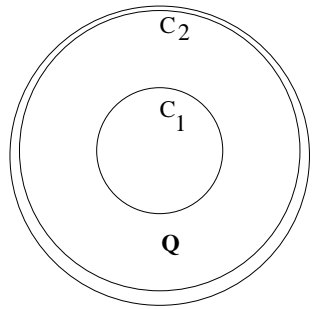,width=4cm}}
\caption{The region between the two concentric circles is the cross-section of the porous matrix}
\end{figure}

 Using cylindrical  coordinates $(\r,\ttt, y)$ and following the usual notations \cite{PS}, we define 
$h(\ttt)=c(1+\e\cos(\ttt))$ for the height of the fluid film, $(0<\e<1)$, $\m$ for the viscosity, $U$ for the velocity of the shaft and $\F$ for the permeability of the porous matrix. Let 

\begin{equation*}
Q=\{(\r,\ttt,y);\ R_1<\r<R_2,\ 0<\ttt\leq 2\pi,\ 0<y<L\}
\end{equation*}
\noindent denote the porous matrix and

\begin{equation*} 
C_1=\{(\r,\ttt,y);\ \r=R_1,\ 0<\ttt\leq 2\pi,\ 0<y<L\}
\end{equation*}

\noindent be the interior surface of the matrix and

\begin{equation*}  
C_2=\{(\r,\ttt,y);\ \r=R_2,\ 0<\ttt\leq 2\p,\ 0<y<L\}
\end{equation*}

\noindent the exterior part, where the Reynolds equation holds. Let

\begin{equation*}  
A_1=\{(\r,\ttt,y);\ R_1<\r<R_2,\ 0<\ttt\leq 2\pi,\  y=0\}
\end{equation*}

\begin{equation*}
A_2=\{(\r,\ttt,y);\ R_1<\r<R_2,\ 0<\ttt\leq 2\pi,\  y=L\}.
\end{equation*}

Define (see \cite{C}, page 549 )

\begin{equation*}
k_1=\frac{c^2}{12\Phi R_2^2},\quad k_2=\frac{c^2}{12L^2\Phi}, \quad k_3=\frac{6U\m}{12\Phi R_2}.
\end{equation*}

 The pressure $p(\r,\ttt,y)$ in the lubricating film and in the matrix is determined by the following problem

\begin{equation}
\label{1_4_1}
\D p(\r,\ttt,y)=0\quad in\  Q
\end{equation}

\begin{equation}
\label{2_4_1}
k_1\bigl[h^3(\ttt)p_\ttt(R_2,\ttt,y)\bigl]_\ttt+k_2\bigl[h^3(\ttt)p_y(R_2,\ttt,y)\bigl]_y=
\end{equation}

\begin{equation*}
p_\r(R_2,\ttt,y)-k_3 h'(\ttt)\quad on\ C_2
\end{equation*}

\begin{equation}
\label{3_4_1}
p(R_1, \ttt, y)=0\quad on\ C_1
\end{equation}

\begin{equation}
\label{4_4_1}
p(\r,\ttt,0)=0\quad on\  A_1
\end{equation}

\begin{equation}
\label{5_4_1}
p(\r,\ttt,L)=0\quad on\  A_2
\end{equation}

\begin{equation}
\label{6_4_1}
p(\r,\ttt,y)=p(\r,\ttt+2\pi,y).
\end{equation}

This problem is interesting also from the mathematical point of view for its non-standard nature. We must satisfy on part of the boundary an elliptic equation of the same order of the operator, (in this case the laplacian) valid in $Q$. It seems that this kind of problems have not been studied in the vast literature on elliptic equations of the second order.

In real bearings the lubricant film cannot sustain pressure below the atmospheric pressure \footnote{Taken here equal to zero in a suitable scale.} leading to the phenomenon of rupture in the film and cavitation. Thus the fluid film breaks down in a well-defined region. To take into account of this fact we must add to problem (\ref{1_4_1})-(\ref{6_4_1}) the unilateral condition

\begin{equation} 
\label{1_5_1}
p\geq 0\quad on\ C_2,
\end{equation}

\noindent with equation (\ref{2_4_1}) holding only where $p>0$. In this paper both problems, with and without the unilateral condition, are considered. Two approaches are possible: the so-called ``half-Sommerfeld'' condition in which the cavitated solution is simply obtained setting equal to zero the solution of the bilateral problem where it becomes negative, or the more precise point of view in which on the boundary of the region of cavitation the two conditions

\begin{equation*}
p=0\quad and\  \frac{dp}{dn}=0
\end{equation*}
 must hold. We show in Section 2 that the solution of problems (\ref{1_4_1})-(\ref{6_4_1}) or (\ref{1_4_1})-(\ref{6_4_1}) with (\ref{1_5_1}) can be obtained from a variational principle. This makes possible to use the Lax-Milgram lemma and the Stampacchia theorem \cite{B} to prove that both problems have one and only one solution.

 The special geometry of the long bearing is considered in Section 3, in this case the pressure shall depend only from two variables with the Reynolds equation becoming one-dimensional. Using the Fourier's method the two equations are uncoupled and we can solve separately a two-point problem for the Reynolds equation and a Dirichlet's problem in the porous matrix. The two-point problem is solved in a relevant special case giving a simple expression for the pressure in the film. The effect of the porous matrix reduces to a single numerical parameter. Moreover, if we adopt the half-Sommerfeld condition we find that the pressure in porous bearings is less than the one in the non-porous bearing. A variational inequality relevant in long bearings is studied in Section 4. Finally in Section 5 some open problems concerning the regularity of the solutions and the shape of the region of cavitation are proposed. We note that the shape of the region of cavitation is crucial  for the determination of the load capacity of the bearing.

\section{variational formulation}
Define the functional

\begin{equation}
\label{1_10}
J(p)=\int_0^{2\pi}\int_{R_1}^{R_2}\int_0^L\biggl[p_\r^2(\r,\ttt,y)+\frac{1}{\r^2}p_\ttt^2(\r,\ttt,y)+p_y^2(\r,\ttt,y)\biggl]\r d\ttt d\r dy+
\end{equation}

\begin{equation*}
+\int_0^{2\pi}\int_0^L\biggl\{ h^3(\ttt)\biggl[k_1 p_\ttt^2(R_2,\ttt,y)+k_2p_y^2(R_2,\ttt,y)\biggl]-2k_3 h'(\ttt)p(R_2,\ttt,y)\biggl\}R_2d\ttt dy.
\end{equation*}

Let us take as class of admissible functions for $J(p)$ the set

\begin{equation*}
\cc=\{p(\r,\ttt,y)\in C^2(\bar Q),\  p(\r,\ttt,y)=p(\r,\ttt+2\pi,y),\  p(R_1,\ttt,y)=0\ for\ 0<\ttt\leq 2\pi,\ 0<y<L,
\end{equation*}

\begin{equation*}
 p(\r,\ttt,0)=0\ for\ R_1<\r<R_2,\ 0<\ttt \leq 2\pi,\  p(\r,\ttt,L)=0\ for\ R_1<\r<R_2,\ 0<\ttt \leq 2\pi\}.
\end{equation*}
Note that if $p(\r,\ttt,y)\in\cc$, $p(R_2,\r,y)$ is not prescribed.  Let $\bar p(\r,\ttt,y)\in \cc$ be a function which makes $J(p)$ a minimum in $\cc$ and let $v(\r,\ttt,y)$ be an arbitrary function of $\cc$ vanishing on the whole boundary of $Q$. If $\a\in \R$ we have $\bar p+\a v\in\cc$. Define $g_1(\e)=J(\bar p+\e v)$. Using the condition for minimality $g_1'(0)=0$ we find

\begin{equation}
\label{1_14}
\int_0^{2\pi}\int_{R_1}^{R_2}\int_0^L\biggl[p_\r(\r,\ttt,y)v_\r(\r,\ttt,y)+\frac{1}{\r^2}p_\ttt(\r,\ttt,y)v_\ttt(\r,\ttt,y)+p_y(\r,\ttt,y)v_y(\r,\ttt,y)\biggl]\r d\ttt d\r dy+
\end{equation}

\begin{equation*}
\int_0^{2\pi}\int_0^L\biggl\{ h^3(\ttt)\biggl[k_1 p_\ttt(R_2,\ttt,y)v_\ttt(R_2,\ttt,y)+k_2p_y(R_2,\ttt,y)v_y(R_2,\ttt,y)\biggl]-k_3 h'(\ttt)v(R_2,\ttt,y)\biggl\}R_2d\ttt dy=0.
\end{equation*}

Integrating by parts in (\ref{1_14}) and taking into account that $v$ vanishes on the boundary of $Q$, we find

\begin{equation}
\label{2_14}
\int_0^{2\pi}\int_{R_1}^{R_2}\int_0^L\biggl[p_{\r\r}(\r,\ttt,y)+\frac{1}{\r^2}p_{\ttt\ttt}(\r,\ttt,y)+p_{yy}(\r,\ttt,y)\biggl]v(\r,\ttt,y)\r d\ttt d\r dy=0.
\end{equation}
Since $v(\r,\ttt,y)$ is arbitrary we conclude that

\begin{equation}
\label{2_15}
p_{\r\r}(\r,\ttt,y)+\frac{1}{\r^2}p_{\ttt\ttt}(\r,\ttt,y)+p_{yy}(\r,\ttt,y)=0\ \ in\ \ Q
\end{equation}
thus equation (\ref{1_4_1})hols. Let us take now a different variation. More precisely let $w\in \cc$ and consider $g_2(\l)=J(\bar p+\l w)$. Recall that $w(\r,\ttt,y)$ is arbitrary on $C_2$. Computing $g'_2(0)=0$ we find again (\ref{1_14}), but with $w$ in place of $v$. This time, however, integrating by parts, the boundary term corresponding to $C_2$ does not vanish and we find, if we take into account (\ref{2_15}),
\begin{equation}
\label{2_16}
\int_0^{2\pi}\int_0^L\biggl[p_\r(R_2,\ttt,y)-\big(k_1 h^3(\ttt)p_\ttt(R_2,\ttt,y)\big)_\ttt-\big(k_2 h^3(\ttt)p_y(R_2,\ttt,y)\big)_y-
\end{equation}

\begin{equation*}
k_3h'(\ttt)\biggl]w(R_2,\ttt,y)R_2d\ttt dy=0.
\end{equation*}
 Given that $w(R_2,\ttt,y)$ is arbitrary we obtain

\begin{equation}
\label{3_16}
-k_1\bigl[h^3(\ttt)p_\ttt(R_2,\ttt,y)\bigl]_\ttt-k_2\bigl[h^3(\ttt)p_y(R_2,\ttt,y)\bigl]_y=-p_\r(R_2,\ttt,y)+k_3 h'(\ttt)
\end{equation}
 i.e.  (\ref{2_4_1}).  We use the established variational principle to prove that problem (\ref{1_4_1})-(\ref{6_4_1}) has one and only one solution. To this end, we  define a new space of admissible functions more mathematically convenient. Let $\hh(Q)$ be the set of functions $v(\r,\ttt,y)$ of class $C^\infty(\bar Q)$ periodic with period $2\pi$ with respect to $\ttt$ which take arbitrary values on $C_2$  and vanish in a neighbourhood of the remaining part of $Q$. We define on $\hh$ the norm

\begin{equation}
\label{1_18}
\pp v\pp=\biggl\{\int_0^{2\pi}\int_{R_1}^{R_2}\int_0^L\biggl[v_\r^2(\r,\ttt,y)+\frac{1}{\r^2}v_\ttt^2(\r,\ttt,y)+v_y^2(\r,\ttt,y)\biggl]\r d\ttt d\r dy+
\end{equation}

\begin{equation*}
\int_0^{2\pi}\int_0^L\biggl[k_1 v_\ttt^2(R_2,\ttt,y)+k_2v_y^2(R_2,\ttt,y)\biggl]R_2d\ttt dy\biggl\}^{1/2}.
\end{equation*}
Let $H^1_{00}(Q)$ be the completion of $\hh$ with respect to the norm (\ref{1_18}). To prove existence and uniqueness of  solutions of problem (\ref{1_4_1})-(\ref{6_4_1}) we use the Lax-Milgram lemma, see \cite{B}, which, for the sake of completeness, we quote below.

\begin{lemma}
Let $H$ be an Hilbert space and $a(u,v)$ a bilinear form defined in $H\times H$. Assume that there exist a constant $C$ such that 

\begin{equation}
\label{1_19}
|a(u,v)|\leq C\pp u\pp\ \pp v\pp\quad for\ all\ u\, v\in H\quad (continuity)
\end{equation}
and a constant $\a>0$ such that

\begin{equation}
\label{2_19}
|a(v,v)|\geq \a\pp v\pp^2\quad for\ all\ u\in H\quad (coerciveness).
\end{equation}

Then, if $f\in H'$, there exists a unique solution of the problem

\begin{equation}
\label{3_19}
u\in H,\quad a(u,v)=<f,v>\quad for\ all\ v\in H.
\end{equation}
Moreover, if $a(u,v)$ is symmetric, $u$ is characterised by the minimum property

\begin{equation}
\label{1_20}
\frac{1}{2}a(u,u)-<f,u>=\min_{v\in H}\biggl\{\frac{1}{2}a(v,v)- <f,v>\biggl\}.
\end{equation}
\end{lemma}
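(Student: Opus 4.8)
The plan is to convert the weak problem (\ref{3_19}) into a single operator equation in $H$ by the Riesz representation theorem, and then to solve that equation by a contraction--mapping argument. First, for each fixed $u\in H$ the map $v\mapsto a(u,v)$ is, by the continuity estimate (\ref{1_19}), a bounded linear functional on $H$; Riesz's theorem therefore yields a unique element $Au\in H$ with $a(u,v)=(Au,v)$ for all $v\in H$, and bilinearity together with (\ref{1_19}) shows that the resulting map $A\colon H\to H$ is linear and bounded, with $\pp Au\pp\le C\pp u\pp$. Representing the functional $f$ in the same way, $<f,v>=(\f,v)$ for a fixed $\f\in H$, problem (\ref{3_19}) is equivalent to the single equation $Au=\f$ in $H$. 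Next I would read off from the coerciveness condition (\ref{2_19}) the lower bound
\begin{equation*}
\a\pp u\pp^2\le a(u,u)=(Au,u)\le\pp Au\pp\,\pp u\pp,
\end{equation*}
hence $\pp Au\pp\ge\a\pp u\pp$ for every $u$; this already shows that $A$ is injective with closed range, and it is the key quantitative input for what follows.

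For a parameter $t>0$ to be chosen I would define $T_t\colon H\to H$ by $T_t u=u-t(Au-\f)$, so that a fixed point of $T_t$ is exactly a solution of $Au=\f$. Writing $z=u-w$ one computes
\begin{equation*}
\pp T_t u-T_t w\pp^2=\pp z\pp^2-2t\,(Az,z)+t^2\pp Az\pp^2\le\bigl(1-2t\a+t^2C^2\bigr)\pp z\pp^2 .
\end{equation*}
For $0<t<2\a/C^2$ the factor $1-2t\a+t^2C^2$ lies in $[0,1)$, so $T_t$ is a contraction, and the Banach fixed--point theorem gives a unique $u\in H$ with $T_t u=u$, i.e. a unique solution of (\ref{3_19}). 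In this way existence (surjectivity of $A$) and uniqueness (injectivity) are obtained simultaneously.

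Finally, when $a$ is symmetric it is itself an inner product on $H$, positive definite by (\ref{2_19}) and inducing a norm equivalent to $\pp\cdot\pp$ by (\ref{1_19}) and (\ref{2_19}); existence and uniqueness then also follow directly from Riesz applied in this energy inner product. For the minimization characterisation (\ref{1_20}) I would expand, for an arbitrary $v=u+w$,
\begin{equation*}
\tfrac12 a(v,v)-<f,v>=\tfrac12 a(u,u)-<f,u>+\bigl[a(u,w)-<f,w>\bigr]+\tfrac12 a(w,w),
\end{equation*}
using symmetry. The bracketed term vanishes because $u$ solves (\ref{3_19}), while $\tfrac12 a(w,w)\ge\tfrac{\a}{2}\pp w\pp^2\ge 0$, with equality only for $w=0$; hence $u$ is the unique minimiser.

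The proof is otherwise routine, so the only delicate point I expect is the precise reading of coerciveness: condition (\ref{2_19}) is written with an absolute value, whereas both the contraction estimate and the expansion above require the \emph{signed} inequality $a(v,v)\ge\a\pp v\pp^2$. In the real, symmetric setting relevant to the bilinear form arising from the functional $J$ this signed form indeed holds, so I would simply record that interpretation; in a general complex setting one would replace $a(v,v)$ and $(Az,z)$ by their real parts throughout the argument.
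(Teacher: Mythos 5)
The paper does not prove this lemma at all: it is the Lax--Milgram lemma, quoted verbatim ``for the sake of completeness'' from the cited reference \cite{B}, so there is no internal proof to compare against. Your argument --- Riesz representation to reduce $a(u,v)=<f,v>$ to the operator equation $Au=\f$, the contraction $T_t u=u-t(Au-\f)$ for $0<t<2\a/C^2$, and the expansion $\tfrac12 a(u+w,u+w)-<f,u+w>=\tfrac12 a(u,u)-<f,u>+\tfrac12 a(w,w)$ for the minimum property --- is correct and is essentially the standard proof found in that reference. Your closing remark is also well taken: the coerciveness hypothesis as printed, $|a(v,v)|\geq\a\pp v\pp^2$, carries an absolute value (and a stray ``for all $u\in H$''), whereas the contraction estimate and the minimisation argument both need the signed inequality $a(v,v)\geq\a\pp v\pp^2$; that signed form is what actually holds for the bilinear form (\ref{1_21}) to which the lemma is applied, so your reading is the intended one.
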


In our context lemma 2.1 is used as follows: $H=H^1_{00}(Q)$,

\begin{equation}
\label{1_21}
a(u,v)=\int_0^{2\pi}\int_{R_1}^{R_2}\int_0^L\bigg[u_\r(\r,\ttt,y)v_\r(\r,\ttt,y)+\frac{1}{\r^2}u_\ttt(\r,\ttt,y)v_\ttt(\r,\ttt,y)+u_y(\r,\ttt,y)v_y(\r,\ttt,y)\bigg]\r d\ttt d\r dy+
\end{equation}

\begin{equation*}
\int_0^{2\pi}\int_0^L h^3(\ttt)\bigg[k_1 u_\ttt(R_2,\ttt,y)v_\ttt(R_2,\ttt,y)+k_2p_y(R_2,\ttt,y)v_y(R_2,\ttt,y)\bigg]R_2d\ttt dy,
\end{equation*}

\begin{equation}
\label{2_21}
<f,v>=\int_0^{2\pi}\int_0^{L}\biggl[k_3 h'(\ttt)v(R_2,\ttt,y)\biggl]R_2d\ttt dy.
\end{equation}

Recalling the definition (\ref{1_18}) of the norm in $H^1_{00}(Q)$ and using the Cauchy-Schwartz inequality we find that (\ref{1_19}) is satisfied. Since $h(\ttt)=c(1+\e \cos(\ttt))\geq c(1-\e)>0$ also (\ref{2_19}) holds.  In the case at hand, (\ref{3_19}) is the weak formulation of problem (\ref{1_4_1})-(\ref{6_4_1}), thus this problem has one and only one weak solution. If the weak solution is sufficiently regular to permit the integration by parts leading to (\ref{2_15}) and (\ref{3_16}) we obtain a classical solution.

 In the same vein, i.e. in term of a weak formulation, we can treat the problem with cavitation. To this end use will be made of the following generalisation \cite{KS} of the Lax-Milgram lemma.

\begin{theorem}

Let $H$, $a(u,v)$ and $f$ be as in Lemma 2.1 and $\kk$ be a convex closed subset of $H$. Then there exists a unique $u$ such that
\begin{equation}
\label{1_22}
u\in\kk,\quad a(u,v-u)\geq<f,v-u>\quad for\ all\ v\in\kk.
\end{equation}
Moreover, if $a(u,v)$ is symmetric $u$ is characterised by the minimum property

\begin{equation}
\label{1_23}
\frac{1}{2}a(u,u)-<f,u>=\min_{v\in \kk}\biggl\{\frac{1}{2}a(v,v)- <f,v>\biggl\}.
\end{equation} 
\end{theorem}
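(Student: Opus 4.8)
The plan is to prove existence and uniqueness by the projection--contraction argument of Lions and Stampacchia, and then to recover the minimum characterisation in the symmetric case from the projection theorem. First I would invoke the Riesz representation theorem in $H=H^1_{00}(Q)$: since for fixed $u$ the map $v\mapsto a(u,v)$ is, by the continuity estimate (\ref{1_19}), a bounded linear functional, there is a bounded linear operator $A\colon H\to H$ with $a(u,v)=(Au,v)$ for all $v\in H$, where $(\cdot,\cdot)$ denotes the inner product of $H$; likewise $f$ is represented by an element of $H$, still denoted $f$, so that $<f,v>=(f,v)$. Then (\ref{1_19}) gives $\pp Au\pp\leq C\pp u\pp$ and the coerciveness (\ref{2_19}) gives $(Au,u)=a(u,u)\geq\a\pp u\pp^2$.

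With this notation the inequality (\ref{1_22}) reads $(Au-f,\,v-u)\geq 0$ for all $v\in\kk$. For any parameter $\r>0$ this is equivalent to
\[\big(u-(u-\r Au+\r f),\,v-u\big)\geq 0\quad\text{for all }v\in\kk,\]
which is exactly the variational characterisation of the orthogonal projection $P_\kk$ onto the closed convex set $\kk$. Hence $u$ solves (\ref{1_22}) if and only if it is a fixed point of the map $Tv=P_\kk\big(v-\r Av+\r f\big)$.

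Next I would show that $T$ is a contraction for a suitable $\r$. Using that $P_\kk$ is nonexpansive (a consequence of the projection theorem, which relies on the closedness and convexity of $\kk$ and the completeness of $H$), for $u_1,u_2\in H$ and $w=u_1-u_2$ one has
\[\pp Tu_1-Tu_2\pp^2\leq\pp w-\r Aw\pp^2=\pp w\pp^2-2\r(Aw,w)+\r^2\pp Aw\pp^2\leq\big(1-2\r\a+\r^2C^2\big)\pp w\pp^2.\]
Choosing $\r\in(0,2\a/C^2)$ makes the factor $1-2\r\a+\r^2C^2$ strictly less than $1$, so $T$ maps $H$ into itself as a contraction; by the Banach fixed point theorem it has a unique fixed point, which yields existence and uniqueness of the solution of (\ref{1_22}).

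Finally, for the minimum property assume $a$ symmetric. Then $a(\cdot,\cdot)$ is, by (\ref{1_19})--(\ref{2_19}), an inner product on $H$ equivalent to the original one; representing $f$ in this inner product as $a(F,\cdot)$ one may write $\tfrac12 a(v,v)-<f,v>=\tfrac12 a(v-F,v-F)-\tfrac12 a(F,F)$, so minimising the functional in (\ref{1_23}) over $\kk$ is the same as minimising the $a$-distance of $v$ to $F$. The projection theorem applied in the Hilbert space $(H,a)$ gives a unique minimiser $u$, whose variational characterisation is precisely $a(u-F,v-u)\geq 0$, i.e. $a(u,v-u)\geq<f,v-u>$ for all $v\in\kk$; by uniqueness this $u$ coincides with the solution of (\ref{1_22}). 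The main obstacle is the contraction estimate: one must combine the continuity and coerciveness bounds correctly and exploit the nonexpansiveness of $P_\kk$ to pin down the admissible range of $\r$, everything else reducing to the Riesz and projection theorems.
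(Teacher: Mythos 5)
Your proof is correct and complete: the reduction of (\ref{1_22}) to a fixed-point equation $u=P_\kk(u-\r Au+\r f)$ via Riesz representation, the contraction estimate $1-2\r\a+\r^2C^2<1$ for $\r\in(0,2\a/C^2)$, and the completion-of-the-square argument identifying the minimiser with the $a$-orthogonal projection of $F$ onto $\kk$ in the symmetric case are all sound (you implicitly use, as one must, that $\kk$ is nonempty, and you correctly read the coerciveness hypothesis as $a(v,v)\geq\a\pp v\pp^2$ despite the absolute value in (\ref{2_19})). Note, however, that the paper does not prove this statement at all: it quotes it as a known generalisation of the Lax--Milgram lemma, citing Kinderlehrer and Stampacchia. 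Your projection--contraction argument is precisely the classical Lions--Stampacchia proof found in that reference, so you have supplied the standard proof that the paper delegates to the literature; a small simplification worth noting is that uniqueness also follows in one line directly from coerciveness, by writing (\ref{1_22}) for two solutions $u_1,u_2$, testing each with the other, and adding to get $\a\pp u_1-u_2\pp^2\leq a(u_1-u_2,u_1-u_2)\leq 0$, without appealing to the Banach fixed point theorem.
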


We apply this theorem with $\kk=\big\{v\in H^1_{00}(Q),\ v\geq 0\ on\ C_2\big\}$. The space $H$, the bilinear form $a(u,v)$ and $f$ shall remain as prescribed in (\ref{1_21}) and (\ref{2_21}). We conclude that problem (\ref{1_4_1})-(\ref{6_4_1}) with (\ref{1_5_1}) has a unique weak solution.

\section{Reduction of the boundary value problem to a single two-point problem}

We treat in this Section the case of the porous long bearing. This leads to a one-dimensional Reynolds equation. Let $Q=\{(x,y);\ 0<x<\pi,\ 0<y<a\}$ represent the porous matrix. We arrive, for the determination of the pressure, to the following boundary value problem: to find $p(x,y)$ such that

\begin{equation}
\label{1_26}
p_{xx}+p_{yy}=0\quad in\ Q
\end{equation}

\begin{equation}
\label{2_26}
p(0,y)=0\quad for\ 0<y<a
\end{equation}

\begin{equation}
\label{3_26}
p(x,0)=0\quad for\ 0<x<\pi
\end{equation}

\begin{equation}
\label{4_26}
p(\pi,y)=0\quad for\ 0<y<a
\end{equation}

\begin{equation}
\label{5_26}
-\big(h^3(x)p_x(x,a)\big)_x+p_y(x,a)=h'(x),
\end{equation}
where $h(x)=1+\e\cos(2x),\quad 0<\e<1$.  We  have set, to simplify notations \footnote{Note that these constants can be reintroduced easily with minor notational changes} $k_1=1$, $k_3=1$ and $c=1$ in (\ref{5_26}).  The problem (\ref{1_26})-(\ref{5_26}) refers to the case without cavitation. To take cavitation into account we must add the unilateral condition $p(x,a)\geq 0$ with (\ref{5_26}) holding only where $p(x,a)>0$. We will show that the coupled problem (\ref{1_26})-(\ref{5_26}) can be reduced to a two-point problem for the unknown $P(x)=p(x,a)$. We first consider the following auxiliary problem: given $P(x)\in C^0([0,\pi])$ such that $P(0)=0$, $P(\pi)=0$, find $p(x,y)$ satisfying the Dirichlet's problem

\begin{equation}
\label{1_28}
p_{xx}+p_{yy}=0\quad in\ Q
\end{equation}

\begin{equation}
\label{2_28}
p(0,y)=0\quad for\ 0<y<a
\end{equation}

\begin{equation}
\label{3_28}
p(x,0)=0\quad for\ 0<x<\pi
\end{equation}

\begin{equation}
\label{4_28}
p(\pi,y)=0\quad for\ 0<y<a
\end{equation}

\begin{equation}
\label{5_28}
p(x,a)=P(x).
\end{equation}

We have

\begin{lemma}
The problem (\ref{1_28})-(\ref{5_28}) has one and only one solution given by

\begin{equation}
\label{1_29}
p(x,y)=\sum_{n=1}^\infty\frac{\sinh(ny)\sin(nx)}{\sinh(na)}b_n,
\end{equation}
where

\begin{equation*}
b_n=\frac{2}{\pi} \int_0^\pi \sin(n\xi)P(\xi)d\xi.
\end{equation*}
\end{lemma}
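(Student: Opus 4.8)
The plan is to separate the two assertions: uniqueness is the quick part and is best done by the maximum principle, while existence via the Fourier series demands some care with convergence. For uniqueness, suppose $p_1$ and $p_2$ both solve (\ref{1_28})-(\ref{5_28}); their difference $w=p_1-p_2$ is harmonic in $Q$ and vanishes on the entire boundary $\partial Q$, so by the maximum principle for harmonic functions (which applies since $Q$ is a bounded rectangle and $w$ is continuous on $\bar Q$) we get $w\equiv 0$. This reduces the lemma to exhibiting one solution and checking that it has the stated form.

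For existence I would use separation of variables. Seeking product solutions $X(x)Y(y)$ of $p_{xx}+p_{yy}=0$ that already satisfy the three homogeneous conditions (\ref{2_28})-(\ref{4_28}), the requirements $X(0)=X(\pi)=0$ force the eigenfunctions $X_n(x)=\sin(nx)$ with separation constant $n^2$, while $Y(0)=0$ selects $Y_n(y)=\sinh(ny)$. Superposing, I am led to the candidate $p(x,y)=\sum_{n\ge 1}c_n\sinh(ny)\sin(nx)$, and imposing the remaining condition (\ref{5_28}) at $y=a$ gives $\sum_n c_n\sinh(na)\sin(nx)=P(x)$. Recognising this as the Fourier sine expansion of $P$ on $[0,\pi]$ fixes $c_n\sinh(na)=b_n$ with $b_n=\frac{2}{\pi}\int_0^\pi\sin(n\xi)P(\xi)\,d\xi$, which is exactly (\ref{1_29}).

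The next step, and the first place where care is needed, is to justify that this series really defines a harmonic function. Since $P\in C^0([0,\pi])$ the coefficients are uniformly bounded, $|b_n|\le\frac{2}{\pi}\int_0^\pi|P|$. For $0\le y\le a-\d$ one has $\sinh(ny)/\sinh(na)\le C\,e^{-n\d}$, so the exponential decay of the $y$-factor dominates the polynomial growth produced by differentiating $\sin(nx)$ or $\sinh(ny)$ term by term; hence the series together with all its derivatives converges absolutely and uniformly on each strip $\{0\le x\le\pi,\ 0\le y\le a-\d\}$. This lets me differentiate termwise, so $p\in C^\infty$ in the open rectangle, and since each term is harmonic so is $p$; the homogeneous conditions (\ref{2_28})-(\ref{4_28}) hold because the convergence is uniform up to those three sides, where the factors $\sin(0)=\sin(n\pi)=0$ and $\sinh(0)=0$ make every term vanish.

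The hard part will be the boundary condition (\ref{5_28}) at $y=a$, since for merely continuous $P$ the Fourier sine series need not converge pointwise. Here I would argue that the weights $\sinh(ny)/\sinh(na)$ act as a summability kernel as $y\to a^-$: each weight tends to $1$, the weights are uniformly controlled, and the associated convolution kernel can be compared with the Poisson kernel, so that $p(x,y)\to P(x_0)$ as $(x,y)\to(x_0,a)$ by a standard Abel summation argument that uses the continuity of $P$ together with the compatibility $P(0)=P(\pi)=0$, which makes the odd $2\pi$-periodic extension of $P$ continuous. Alternatively, and more cheaply, I would invoke that the Dirichlet problem on the rectangle with the boundary data equal to $P$ on $y=a$ and to $0$ on the other three sides (these values being compatible at the four corners, so the data is continuous on $\partial Q$) has a solution continuous on $\bar Q$ by classical potential theory; since the series is a bounded harmonic function whose trace agrees with $P$ in the $L^2$ sense, the uniqueness already established forces the series to coincide with that solution. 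Either route closes the proof.
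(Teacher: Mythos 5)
Your proof is correct, but it takes a genuinely different route from the paper's. The paper simply quotes the explicit Green's function $G(x,y;\xi,\eta)$ for the Laplacian on the rectangle (citing Freiberger), writes the solution by the representation formula $p(x,y)=-\int_0^\pi \partial_\eta G(x,y;\xi,a)P(\xi)\,d\xi$, and reads off the series (\ref{1_29}) from the eigenfunction expansion of $\partial_\eta G$ at $\eta=a$; it does not address uniqueness or any convergence question. You instead derive the same series by separation of variables and then do the analytic work the paper omits: uniqueness via the maximum principle, interior smoothness and harmonicity via the exponential decay of $\sinh(ny)/\sinh(na)$ on strips $y\le a-\delta$, and attainment of the boundary datum at $y=a$ by a summability-kernel argument or by identification with the classical solution of the Dirichlet problem. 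The trade-off is clear: the paper's argument is a one-line computation once the Green's function is accepted, while yours is self-contained and actually substantiates the ``one and only one'' claim rather than just producing a candidate formula. One small point on your second route: the uniqueness you proved applies to solutions continuous on $\bar Q$, so to identify the series with the Perron solution you should not invoke it for ``a bounded harmonic function whose trace agrees in the $L^2$ sense'' (a statement you have not established); it is cleaner to note that for each fixed $y<a$ the Fourier sine coefficients $\hat u_n(y)$ of the Perron solution satisfy $\hat u_n''=n^2\hat u_n$, $\hat u_n(0)=0$, $\hat u_n(a)=b_n$, hence equal $b_n\sinh(ny)/\sinh(na)$, which identifies the two functions on every interior horizontal line and therefore everywhere.
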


\begin{proof}
The Green's function  in $Q$ for the Laplacian is , see \cite{F} page 138,
$$
G(x,y;\xi,\ep)=
\left\{
\begin{array}{lr}
\sum_{n=1}^\infty\frac{2\sinh[n(a-y)]\sinh(n\ep)\sin(nx)\sin(n\xi)}{\pi n\sinh(na)}&if\ \ep\leq  y\\
\sum_{n=1}^\infty\frac{2\sinh(ny)\sinh[n(a-\ep)]\sin(nx)\sin(n\xi)}{\pi n\sinh(na)}&if\ y\leq \ep.
\end{array}
\right.
$$

Let $\nnn$ be the outward unit vector normal to the boundary of $Q$. On the boundary of $Q$ $p$ is different from zero only on the side $\{(x,a)\ 0<x<\pi\}$, where we have

\begin{equation}
\label{1_30}
\frac{\pa }{\pa n}=\frac{\pa }{\pa \ep}.
\end{equation}

Therefore

\begin{equation}
\label{2_30}
p(x,y)=-\int_0^\pi\frac{\pa G}{\pa \ep}(x,y;\xi,a)P(\xi)d\xi.
\end{equation}

On the other hand,

\begin{equation*}
\frac{\pa G}{\pa \ep}(x,y;\xi,a)=-2\sum_{n=1}^\infty \frac{\sinh(ny)\sin(nx)\sin(n\xi)}{\pi\sinh(na)}.
\end{equation*}

Setting

\begin{equation}
\label{0_32}
b_n=\frac{2}{\pi}\int_0^\pi \sin(n\xi)P(\xi)d\xi
\end{equation}
we obtain

\begin{equation}
\label{1_32}
p(x,y)=\sum_{n=1}^\infty\frac{\sinh(ny)\sin(nx)}{\sinh(na)}b_n
\end{equation}

i.e. (\ref{1_29}).
\end{proof}

\begin{example}
Consider the case $P(x)=\sin(2x)$. Since

\begin{equation}
\label{2_32}
\end{equation}
$$
\int_0^\pi\sin(mx)\sin(nx)dx=
\left\{
\begin{array}{lr}
\frac{\pi}{2}&if\ m=n\\
0&if\ m\neq n
\end{array}
\right.
$$
we find easily

\begin{equation}
\label{3_32}
p(x,y)=\frac{\sinh(2y)\sin(2x)}{\sinh(2a)}.\quad\Box
\end{equation}
\end{example}
\vskip .5cm

\noindent From (\ref{1_32}) we obtain

\begin{equation}
\label{1_33}
\frac{\pa p}{\pa y}(x,a)=\sum_{n=1}^\infty n\coth(na)b_n\sin(nx).
\end{equation}
We conclude that the solution of problem (\ref{1_26})-(\ref{5_26}) can be obtained in two steps. First we solve the following integro-differential equation

\begin{equation}
\label{2_33}
-\big(h^3(x)P'(x))'+\sum_{n=1}^\infty n\coth(na)b_n\sin(nx)=h'(x)
\end{equation}
with the boundary conditions

\begin{equation}
\label{3_33}
P(0)=0,\quad  P(\pi)=0.
\end{equation}
Once the solution $P(x)$ of problem (\ref{2_33}), (\ref{3_33}) is known the pressure in the porous matrix $p(x,y)$ is obtained solving the Laplace equation (\ref{1_28}) with the boundary conditions
\begin{equation}
\label{a}
 p(x,a)=P(x),\  p(x,0)=0,\  p(0,y)=0,\ p(\pi,y)=0. 
\end{equation}

\begin{theorem}
There exists one and only one solution in $H^1_0(0,\pi)$ to problem (\ref{2_33}), (\ref{3_33}).
\end{theorem}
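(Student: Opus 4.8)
The plan is to recast \eqref{2_33}, \eqref{3_33} in weak form and apply the Lax--Milgram lemma (Lemma 2.1) with $H=H^1_0(0,\pi)$. Multiplying \eqref{2_33} by a test function $v\in H^1_0(0,\pi)$ and integrating over $(0,\pi)$, the first term gives $\int_0^\pi h^3(x)P'(x)v'(x)\,dx$ after an integration by parts, the boundary contributions vanishing since $v(0)=v(\pi)=0$. For the nonlocal term I would expand $v$ in its Fourier sine series $v(x)=\sum_{n=1}^\infty c_n\sin(nx)$, with $c_n=\frac{2}{\pi}\int_0^\pi v(\xi)\sin(n\xi)\,d\xi$, and use orthogonality of the sine system to obtain
$$
\int_0^\pi\Big(\sum_{n=1}^\infty n\coth(na)b_n\sin(nx)\Big)v(x)\,dx=\frac{\pi}{2}\sum_{n=1}^\infty n\coth(na)\,b_nc_n .
$$
This suggests the symmetric bilinear form and functional
$$
a(P,v)=\int_0^\pi h^3(x)P'(x)v'(x)\,dx+\frac{\pi}{2}\sum_{n=1}^\infty n\coth(na)\,b_nc_n,\qquad <f,v>=\int_0^\pi h'(x)v(x)\,dx ,
$$
where $b_n,c_n$ are the sine coefficients of $P,v$. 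The weak solution is then the $P\in H^1_0(0,\pi)$ with $a(P,v)=<f,v>$ for every $v\in H^1_0(0,\pi)$.

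I would fix on $H^1_0(0,\pi)$ the norm $\pp v\pp^2=\int_0^\pi (v')^2\,dx=\frac{\pi}{2}\sum_{n=1}^\infty n^2c_n^2$, equivalent to the full $H^1$ norm by Poincar\'e's inequality, and verify the two hypotheses of Lemma 2.1. For coerciveness, since $h(x)=1+\e\cos(2x)\geq 1-\e>0$ one has $h^3\geq(1-\e)^3$, so $\int_0^\pi h^3 (v')^2\,dx\geq(1-\e)^3\pp v\pp^2$; moreover $\coth(na)>0$ makes the series $\frac{\pi}{2}\sum_{n=1}^\infty n\coth(na)c_n^2$ nonnegative, whence $a(v,v)\geq(1-\e)^3\pp v\pp^2$. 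Thus coerciveness comes entirely from the elliptic term, with $\a=(1-\e)^3$.

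For continuity, the first term is bounded by $\pp h^3\pp_\infty\,\pp P\pp\,\pp v\pp\leq(1+\e)^3\pp P\pp\,\pp v\pp$. For the series the key observation is that $\coth$ is decreasing, so $\coth(na)\leq\coth(a)$ for all $n\geq 1$; combining this with $n\leq n^2$ and the Cauchy--Schwarz inequality yields
$$
\Big|\frac{\pi}{2}\sum_{n=1}^\infty n\coth(na)b_nc_n\Big|\leq\frac{\pi}{2}\coth(a)\sum_{n=1}^\infty n|b_n||c_n|\leq\frac{\pi}{2}\coth(a)\Big(\sum n^2b_n^2\Big)^{1/2}\Big(\sum n^2c_n^2\Big)^{1/2}=\coth(a)\,\pp P\pp\,\pp v\pp .
$$
Finally $<f,v>$ is bounded because $h'(x)=-2\e\sin(2x)\in L^2(0,\pi)$, so $f\in H^{-1}(0,\pi)$. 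Lemma 2.1 then delivers a unique $P\in H^1_0(0,\pi)$, and the symmetry of $a$ characterises it as the minimiser of $\tfrac12 a(v,v)-<f,v>$.

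The main obstacle is controlling the nonlocal term $\sum_{n}n\coth(na)b_n\sin(nx)$, which couples every Fourier mode of $P$. The point to get right is its scaling: tested against $v$ it grows only like $n$, one power below the $n^2$ carried by the $H^1$ norm, so it cannot supply coerciveness and one must instead invoke positivity of $\coth$ to guarantee it does not spoil the lower bound; for continuity the uniform estimate $\coth(na)\leq\coth(a)$ together with $n\leq n^2$ is precisely what absorbs it into $\pp P\pp\,\pp v\pp$.
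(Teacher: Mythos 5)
Your proposal is correct and follows essentially the same route as the paper: the same weak formulation with the bilinear form $a(P,V)=\int_0^\pi h^3P'V'\,dx+\frac{\pi}{2}\sum_n n\coth(na)b_nc_n$, coerciveness from $h^3\geq(1-\e)^3$ together with positivity of the $\coth$ series, and continuity from the uniform bound $\coth(na)\leq\coth(a)$ plus Cauchy--Schwarz. The only cosmetic difference is that you absorb the $L^2$ factor via $n\leq n^2$ on the Fourier side where the paper invokes Wirtinger's inequality; these are the same estimate.
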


\begin{proof}
Let

\begin{equation}
\label{1_34}
P(x)=\sum_{n=1}^\infty b_n \sin(nx)\in H^1_0(0,\pi),\quad   b_n=\frac{2}{\pi}\int_0^\pi P(\xi)\sin(n\xi)d\xi
\end{equation}

\begin{equation*}
V(x)=\sum_{n=1}^\infty c_n \sin(nx)\in H^1_0(0,\pi),\quad   c_n=\frac{2}{\pi}\int_0^\pi V(\xi)\sin(n\xi)d\xi.
\end{equation*}
We have

\begin{equation}
\label{2_34}
\int_0^\pi P'(x)V'(x)dx=\sum_{n=1}^\infty n^2 b_n c_n
\end{equation}

and

\begin{equation}
\label{1_35}
\pp V\pp_{H^1_0(0,\pi)}=\biggl(\sum_{n=1}^\infty n^2 c_n^2\biggl)^{1/2},\quad \pp V\pp_{L^2(0,\pi)}=\biggl(\sum_{n=1}^\infty c_n^2\biggl)^{1/2}.
\end{equation}
To write in weak form (\ref{2_33}), (\ref{3_33}) we multiply (\ref{2_33}) by $V(x)$. Integrating by parts over $(0,\pi)$ we have by (\ref{3_33})

\begin{equation*}
-\int_0^\pi \bigl(h^3(x)P'(x)\bigl)'V(x)dx=\int_0^\pi h^3(x)P'(x)V'(x)dx.
\end{equation*}
After simple calculations we obtain, recalling (\ref{2_32})

\begin{equation*}
\int_0^\pi \biggl[\sum_{n=1}^\infty n\coth(na)b_n\sin(nx)\biggl]\biggl[\sum_{k=1}^\infty c_k\sin(kx)\biggl]dx=\frac{\pi}{2}\sum_{n=1}^\infty n\coth(na)b_nc_n.
\end{equation*}
Thus we can rewrite problem (\ref{2_33}), (\ref{3_33}) as follows: to find $P(x)\in H^1_0(0,\pi)$ such that

\begin{equation}
\label{b}
\int_0^\pi h^3(x)P'(x)V'(x)dx+\frac{\pi}{2}\sum_{n=1}^\infty n\coth(na)b_nc_n=\int_0^\pi h'(x)V(x)dx,\ for\ all\ V(x)\in H^1_0(0,\pi).
\end{equation}
We use the Lax-Milgram lemma to prove that (\ref{b}) has one and only one solution. Define in $H_0^1(0,\pi)$ the bilinear form

\begin{equation}
\label{c}
a(P,V)=\int_0^\pi h^3(x)P'(x)V'(x)dx+\frac{\pi}{2}\sum_{n=1}^\infty n\coth(na)b_nc_n.
\end{equation}
Since $\coth(na)\leq\coth(a)$ for $n\geq 1$, we have, using the Wirtinger's inequality,

\begin{equation*}
\biggl|\sum_{n=1}^{\infty}n\coth(na)b_nc_n\biggl|\leq\coth(a)\sum_{n=1}^\infty n|b_n||c_n|\leq\coth(a)\biggl(\sum_{n=1}^\infty n^2b_n^2\biggl)^{1/2}\biggl(\sum_{n=1}^\infty c_n^2\biggl)^{1/2}\leq
\end{equation*}

\begin{equation*}
\coth(a)\pp P\pp_{H^1_0(0,\pi)}\pp V\pp_{L^2(0,\pi)}\leq \coth(a)\pp P\pp_{H^1_0(0,\pi)}\pp V\pp_{H^1_0(0,\pi)}.
\end{equation*}
Thus $a(P,V)$ is bounded , but it is also coercive. In fact, since $h(x)^3\geq (1-\e)^3$ we have

\begin{equation*}
a(P,P)=\int_0^\pi h^3(x)P'(x)^2dx+\sum_{n=1}^\infty n\coth(na)b_n^2\geq (1-\e)^3\pp P\pp^2_{H^1_0(0,\pi)}.
\end{equation*}
Since $\int_0^\pi h'(x)V(x)dx$ defines a linear functional on $H^1_0(0,\pi)$ we conclude that problem (\ref{b}) has one and only one solution.
\end{proof}

If we have the additional condition $P(x)\geq 0$, we must search $P(x)$ in the closed and convex set $\kk=\big\{P(x)\in H^1_0(0,\pi),\ P(x)\geq 0\big\}$. The weak formulation of the unilateral problem becomes

\begin{equation}
\label{1_39}
P(x)\in\kk,\quad  a(P,V-P)\geq \int_0^\pi h'(x)(V(x)-P(x))dx\quad  for\ all\ V\in\kk.
\end{equation}
This time we use Theorem 2.2 to solve the variational inequality (\ref{1_39}). To obtain the pressure in the matrix we consider, as in the non-unilateral case, the Dirichlet's problem  (\ref{1_28}), (\ref{4_28}) taking in (\ref{5_28}) the boundary value $P(x)$ given by the solution of (\ref{1_39}).
\vskip .3cm
 Particularly interesting is the case of the infinitely long bearing of small eccentricity ratio. The two-point problem (\ref{2_33}), (\ref{3_33}) becomes

\begin{equation}
\label{1_41}
-P''(x)+\sum_{n=1}^\infty n\coth(na)b_n\sin(nx)=\sin(2x)
\end{equation}

\begin{equation}
\label{2_41}
P(0)=0,\quad P(\pi)=0,
\end{equation}

\noindent where

\begin{equation}
\label{3_41}
b_n=\frac{2}{\pi}\int_0^\pi P(\xi)\sin(n\xi)d\xi.
\end{equation}

The problem (\ref{1_41})-(\ref{3_41}) has an explicit and particularly simple solution. Substituting 

\begin{equation}
\label{4_41}
P(x)=\sum_{n=1}^\infty b_n\sin(nx)
\end{equation}
in (\ref{1_41}) we have

\begin{equation}
\label{1_42}
\sum_{n=1}^\infty n^2 b_n\sin(nx)+\sum_{n=1}^\infty n\coth(na)b_n\sin(nx)=\sin(2x).
\end{equation}
The $b_n$'s entering in (\ref{1_42}) are easily computed if we multiply (\ref{1_42}) by $\sin(nx)$, $n=1,2...$ and integrate over $[0,\pi]$ with respect to $x$, recalling (\ref{2_32}) we find $b_n=0$ if $n\neq 2$ and

\begin{equation}
\label{2_42}
b_2=\frac{1}{2\bigl[2+\coth(2a)\bigl]}.
\end{equation}
Whence from (\ref{4_41}) we obtain

\begin{equation}
\label{3_42}
P(x)=\frac{\sin(2x)}{2\bigl[2+\coth(2a)\bigl]}
\end{equation}
with the pressure in $\bar Q$ given, in view of (\ref{1_32}), by

\begin{equation*}
p(x,y)=\frac{\sinh(2y)\sin(2x)}{2\sinh(2a)\bigl[2+\coth(2a)\bigl]}.
\end{equation*}
\vskip.5cm
\noindent The problem  for the non-porous case corresponding to (\ref{1_41}), (\ref{2_41}) is

\begin{equation}
\label{1_43}
-P''_0(x)=\sin(2x),\quad P_0(0)=0,\quad P_0(\pi)=0
\end{equation}

which has the simple solution

\begin{equation}
\label{2_43}
P_0(x)=\frac{\sin(2x)}{4}.
\end{equation}

The comparison between (\ref{3_42}) and (\ref{2_43}) shows that

\begin{equation}
\label{3_43}
|P_0(x)|> |P(x)|.
\end{equation}

If we assume the "half-Sommerfeld" condition to take into account of cavitation, we have from (\ref{3_42})

\begin{equation}
\label{1_43_1}
\end{equation}
$$
P(x)=
\left\{
\begin{array}{lr}
\frac{\sin(2x)}{2\bigl[2+\coth(2a)\bigl]}&if\ 0\leq x\leq\frac{\pi}{2}\\
0&if \frac{\pi}{2}<x\leq \pi.
\end{array}
\right.
$$

From (\ref{1_43_1}) we can draw the conclusion that, at least in the present simplified model, the pressure in porous bearings is always less than in the corresponding non-porous bearings and that the pressure increases as the porous matrix increases in size.

\section{Properties of the solution of a certain one-dimensional variational inequality}

Let $f(x)$ be a continuous function defined in $[0,L]$ satisfying

\begin{equation}
\label{1_45}
f(x)>0\quad in\ [0,L/2),\quad and\ f(x)<0\quad in\ (L/2,L]
\end{equation}

\begin{equation}
\label{2_45}
f(L-x)=-f(x)\quad in\ [0,L].
\end{equation}

We wish to study the variational inequality $u(x)\in\kk$

\begin{equation}
\label{3_45}
\int_0^L u'(x)(v'(x)-u'(x))dx\geq \int_0^L f(x)(v(x)-u(x))dx\quad for\ all\ v\in\kk,
\end{equation}
where

\begin{equation}
\label{4_45}
\kk=\big\{v(x)\in H^1_0(0,L),\ v(x)\geq 0\ in\ [0,L]\big\}.
\end{equation}

If $L=\pi$ and $f(x)=\sin(2x)$, (\ref{3_45}) is the variational inequality of the long non-porous bearing of small eccentricity ratio. The solution of (\ref{3_45}) exists, is unique and of class $C^1([0,L])$ by a general result of the theory \cite{KS}. There is a close link between (\ref{3_45}) and the following over-determined two-point problem: to find $\bar\xi\in(0,L)$ and $u(x)\in C^1([0,L])$ such that

\begin{equation}
\label{1_46}
-u''(x)=f(x),\quad x\in(0,\bar\xi)
\end{equation}

\begin{equation}
\label{2_46}
u(0)=0
\end{equation}

\begin{equation}
\label{3_46}
u(\bar\xi)=0
\end{equation}

\begin{equation}
\label{4_46}
u'(\bar\xi)=0.
\end{equation}

\begin{lemma}
Let $f(x)$ satisfy (\ref{1_45})-(\ref{2_45}). There exists one and only one solution $(\bar\xi$, $u(x))$ of problem (\ref{1_46})-(\ref{4_46}). Moreover,

\begin{equation}
\label{1_47}
\frac{L}{2}<\bar\xi<L
\end{equation}

\begin{equation}
\label{2_47}
u(x)=x\int_0^{\bar\xi}f(t)dt-\int_0^x (x-z)f(z)dz
\end{equation}
and

\begin{equation}
\label{3_47}
u(x)>0\quad in\ (0,\bar\xi).
\end{equation}
\end{lemma}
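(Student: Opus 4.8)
The plan is to collapse the over-determined free-boundary problem (\ref{1_46})-(\ref{4_46}) into a single scalar equation for $\bar\xi$ and then read off all three conclusions from the explicit form of $u$. First I would integrate $-u''=f$ twice and impose $u(0)=0$ to get the general solution $u(x)=u'(0)x-\int_0^x(x-z)f(z)\,dz$. The condition $u'(\bar\xi)=0$ forces $u'(0)=\int_0^{\bar\xi}f(t)\,dt$, which already produces formula (\ref{2_47}). Substituting this value into $u(\bar\xi)=0$ and using $\int_0^{\bar\xi}(\bar\xi-z)f(z)\,dz=\bar\xi\int_0^{\bar\xi}f-\int_0^{\bar\xi}zf$ makes the leading terms cancel and reduces the last condition to the single equation $\int_0^{\bar\xi}t\,f(t)\,dt=0$. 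Hence a pair $(\bar\xi,u)$ solves (\ref{1_46})-(\ref{4_46}) if and only if $\bar\xi$ is a zero in $(0,L)$ of $\Phi(\xi):=\int_0^\xi t\,f(t)\,dt$ and $u$ is given by (\ref{2_47}); existence and uniqueness for the whole problem thus reduce to existence and uniqueness of such a zero, and the $C^1$ extension to $[0,L]$ is obtained by setting $u\equiv 0$ on $[\bar\xi,L]$, the matching being guaranteed precisely by $u(\bar\xi)=u'(\bar\xi)=0$.

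Second, I would locate that zero from the sign hypotheses. Since $\Phi'(\xi)=\xi f(\xi)$, condition (\ref{1_45}) makes $\Phi$ strictly increasing on $(0,L/2)$ and strictly decreasing on $(L/2,L)$; as $\Phi(0)=0$ this gives $\Phi>0$ on $(0,L/2]$, so no zero lies in the first half and in particular $\bar\xi>L/2$. It remains to show $\Phi(L)<0$, and this is where antisymmetry (\ref{2_45}) enters: substituting $t\mapsto L-t$ in the part of the integral over $(L/2,L)$ and using $f(L-t)=-f(t)$ gives $\Phi(L)=\int_0^{L/2}(2t-L)f(t)\,dt$, whose integrand is negative on $(0,L/2)$, so $\Phi(L)<0$. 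By the intermediate value theorem together with strict monotonicity on $(L/2,L)$, $\Phi$ has exactly one zero $\bar\xi$ there, which simultaneously yields uniqueness and the bound (\ref{1_47}).

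Third, for the positivity (\ref{3_47}) I would work with $u'(x)=\int_x^{\bar\xi}f(t)\,dt$, obtained by subtracting $\int_0^x f$ from $u'(0)$. On $(L/2,\bar\xi)$ the integrand is negative, so $u'<0$ there; since $u(\bar\xi)=0$ this forces $u>0$ on $[L/2,\bar\xi)$, and in particular $u(L/2)>0$. On $(0,L/2)$ we have $u''=-f<0$, so $u$ is strictly concave; a strictly concave function with $u(0)=0$ and $u(L/2)>0$ lies above the chord joining these endpoints, which is strictly positive on $(0,L/2]$. Combining the two ranges gives $u>0$ throughout $(0,\bar\xi)$.

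The double integration, the change of variables, and the monotonicity/concavity bookkeeping are all routine. The step I expect to be the crux is the sign determination $\Phi(L)<0$: it is the only place where the antisymmetry (\ref{2_45}) is genuinely used, and it is exactly what pins $\bar\xi$ strictly below $L$ and thereby guarantees that the free boundary lies in the physically correct half of the interval.
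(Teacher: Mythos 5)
Your proposal is correct and follows essentially the same route as the paper: both reduce the over-determined problem to the scalar equation $F(\xi)=\int_0^{\xi}zf(z)\,dz=0$, locate its unique root in $(L/2,L)$ by sign and monotonicity considerations, and read off (\ref{2_47}) from the two integrations. The only difference is that you explicitly supply two details the paper merely asserts, namely the computation $F(L)=\int_0^{L/2}(2t-L)f(t)\,dt<0$ via the antisymmetry (\ref{2_45}) and the concavity/monotonicity argument for the positivity (\ref{3_47}); both are correct and welcome additions.
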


\begin{proof}
Integrating equation (\ref{1_46}) and using (\ref{4_46}) we find

\begin{equation}
\label{1_47_1}
u'(x)=\int_0^{\bar\xi} f(t)dt-\int_0^x f(t)dt.
\end{equation}

From (\ref{3_46}) we obtain the equation $F(\xi)=0$, where

\begin{equation}
\label{1_48}
F(\xi):=-\int_0^{\xi}\biggl[\int_0^z f(t)dt\biggl]dz+\xi\int_0^{\xi} f(t)dt=\int_0^{\xi} zf(z)dz
\end{equation}
which determines $\bar\xi$. On the other hand, $F(\frac{L}{2})>0$ by (\ref{1_45}) and $F(L)<0$. Hence the equation $F(\xi)=0$ has at least one solution $\bar\xi$. This solution is unique. In fact, $F(\xi)>0$ in $(0,\frac{L}{2})$, thus by (\ref{4_46}) we have $\bar\xi>\frac{L}{2}$. If, by contradiction, there is a second solution $\xi^*$ we have $F'(\xi)=\xi f(\xi)<0$ in $(\frac{L}{2},L)$, $F(\bar\xi)=0$ and $F(\xi^*)=0$, but this cannot happen by Rolle's theorem. Finally the inequality (\ref{3_47}) follows from (\ref{1_45}), (\ref{2_45}) and (\ref{2_47}).
\end{proof}

\begin{example}
If $L=\pi$ and $f(x)=\sin(2\pi)$ we find $\bar\xi$ using (\ref{1_48}). We find 

\begin{equation*}
F(\xi)=\frac{1}{4}\sin(2\xi)-\frac{\xi}{2}\cos(2\xi)=0
\end{equation*}
equation which in $(0,\pi)$ has one and only one solution i.e. $\bar\xi=2.246704729$. The corresponding $u(x)$ solution of problem (\ref{1_46})-(\ref{4_46}) is given by (\ref{2_47}).
\end{example}

\begin{lemma}
Let $u(x)$ and $\bar\xi$ be the solution of problem (\ref{1_46})-(\ref{4_46}). Then 
\begin{equation}
\label{q}
\end{equation}
$$
\bar u(x)=
\left\{
\begin{array}{lr}
u(x)&x\in[0,\bar\xi)\\
0&x\in[\bar\xi,L]
\end{array}
\right.
$$
solves the variational inequality (\ref{3_45}), (\ref{4_45}).
\end{lemma}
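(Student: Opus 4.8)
The plan is to verify first that $\bar u$ is an admissible competitor, i.e. $\bar u\in\kk$, and then to check the variational inequality (\ref{3_45}) directly by reducing it, through one integration by parts, to a sign condition that the geometry of $f$ makes automatic. For admissibility: continuity of $\bar u$ across $x=\bar\xi$ is guaranteed by $u(\bar\xi)=0$ from (\ref{3_46}), and the smooth-fit condition $u'(\bar\xi)=0$ from (\ref{4_46}) makes $\bar u'$ continuous there as well, so $\bar u\in C^1([0,L])$ and in particular $\bar u\in H^1_0(0,L)$ (its derivative is continuous, hence in $L^2$, and $\bar u(0)=\bar u(L)=0$). Nonnegativity is immediate: on $[0,\bar\xi)$ one has $\bar u=u>0$ by (\ref{3_47}), while $\bar u\equiv0$ on $[\bar\xi,L]$. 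Hence $\bar u\in\kk$.

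Next I would evaluate the two bilinear terms appearing in (\ref{3_45}) with $u$ replaced by $\bar u$, exploiting $\bar u'\equiv0$ on $[\bar\xi,L]$ so that every integral collapses onto $[0,\bar\xi]$. For an arbitrary $v\in\kk$, integrating by parts over $[0,\bar\xi]$ and using $v(0)=0$ together with $u'(\bar\xi)=0$ to kill the boundary contributions, and then $-u''=f$ from (\ref{1_46}), I obtain
\begin{equation*}
\int_0^L\bar u'v'\,dx=\int_0^{\bar\xi}u'v'\,dx=\int_0^{\bar\xi}(-u'')v\,dx=\int_0^{\bar\xi}f v\,dx.
\end{equation*}
Taking $v=\bar u$ in the same computation gives $\int_0^L(\bar u')^2\,dx=\int_0^{\bar\xi}f u\,dx=\int_0^L f\bar u\,dx$. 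Substituting both identities into (\ref{3_45}) and cancelling the common term $\int_0^L f\bar u\,dx$ from the two sides, the inequality to be proved reduces to
\begin{equation*}
\int_0^{\bar\xi}f v\,dx\;\geq\;\int_0^L f v\,dx,\qquad\text{equivalently}\qquad \int_{\bar\xi}^L f v\,dx\leq0.
\end{equation*}

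Finally I would close the argument using the sign structure of $f$. By (\ref{1_47}) we have $\bar\xi>L/2$, so the interval $(\bar\xi,L)$ lies inside $(L/2,L]$, where $f<0$ by (\ref{1_45}); since $v\geq0$ on $[0,L]$ for every $v\in\kk$ by (\ref{4_45}), the integrand $f v$ is nonpositive on $[\bar\xi,L]$, whence $\int_{\bar\xi}^L f v\,dx\leq0$ and (\ref{3_45}) holds. The one point demanding care, and the true reason the truncation solves the inequality, is the smooth-fit condition $u'(\bar\xi)=0$: it is precisely what lets the integration by parts on $[0,\bar\xi]$ proceed with no boundary term at $\bar\xi$ (equivalently, it prevents a singular term from appearing in $\bar u''$ at the free point $\bar\xi$), while the location $\bar\xi>L/2$ furnished by the previous lemma is exactly what supplies the final sign.
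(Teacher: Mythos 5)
Your proposal is correct and follows essentially the same route as the paper: integrate by parts over $[0,\bar\xi]$, use the smooth-fit condition $u'(\bar\xi)=0$ and $v(0)=u(0)=0$ to kill boundary terms, substitute $-u''=f$, and reduce the inequality to $\int_{\bar\xi}^L f v\,dx\leq 0$, which follows from $\bar\xi>L/2$, the sign of $f$ on $(L/2,L]$, and $v\geq 0$. Your explicit verification that $\bar u\in\kk$ (and your correct non-strict inequality $\int_{\bar\xi}^L fv\,dx\leq 0$, where the paper loosely writes a strict one) is a small but welcome tightening, not a different argument.
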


\begin{proof}
In view of (\ref{3_45}) and (\ref{q}) we can rewrite (\ref{3_45}) in the equivalent form

\begin{equation}
\label{2_51}
\int_0^{\bar\xi} u'(x)(v'(x)-u'(x))dx\geq \int_0^{\bar\xi}f(x)(v(x)- u(x))dx+\int_{\bar\xi}^L f(x)v(x)dx.
\end{equation}
Integrating by parts we have, recalling (\ref{1_46}), (\ref{2_46}) and (\ref{4_46}),

\begin{equation}
\label{e}
\int_0^{\bar\xi} u'(x)(v'(x)-u'(x))dx=-\int_0^{\bar\xi}u"(x)(v(x)-u(x))dx=\int_0^{\bar\xi} f(x)(v(x)-u(x))dx.
\end{equation}
Thus (\ref{2_51}) becomes

\begin{equation}
\label{f}
\int_0^{\bar\xi} f(x)(v(x)-u(x))dx\geq \int_0^{\bar\xi} f(x)(v(x)-u(x))dx+\int_{\bar\xi}^L f(x)v(x)dx
\end{equation}
which is certainly true since by (\ref{1_45}) and (\ref{1_47})
\begin{equation*}
\int_{\bar\xi}^L f(x)v(x)dx<0.
\end{equation*}
\end{proof}

\section{conclusion}
 Consider the problems without cavitation. Using Lemma 2.2 we proved  that there is one and only one weak solution of problem (\ref{1_4_1})-(\ref{6_4_1}). Can we prove that this weak solution is also a classical solution, in other words that the weak solutions has all the derivatives needed to satisfy the equations (\ref{1_4_1}) and (\ref{2_4_1}) ? One would expect even more, i.e. that the solution is of class $C^\infty$. Completely different is the situation for the problem in which cavitation is taken into account. In this case there is certainly a threshold of regularity. Probably one cannot go beyond the $C^{0,\a}(Q)$ regularity as found in a special case in \cite{SC}. Another question worthy considering is the shape of the region of cavitation, see \cite{Ci}. In the variational inequality considered in Section 4 the region of cavitation (``the coincidence set'') is an interval. One expect the same to be true also in the porous case. More generally, the region of cavitation is studied in the non porous and  two-dimensional case in \cite{Cc}. It would be interesting to find similar results also in the porous case of Section 2.

\bibliographystyle{amsplain}

\begin{thebibliography}{10}

\bibitem{B} H. Brezis, Functional Analysis, Sobolev Spaces and Partial Differential Equations, Springer Verlag, New York, 2010.

\bibitem{C} A. Cameron,The Principles of Lubrication, Longmans, London, 1966.

\bibitem{Ci} G.Cimatti, On the mathematical theory of porous metal bearings, {\it Meccanica} {\bf 2} (1980), 112-117.

\bibitem{Cc}G. Cimatti, On a problem of the theory of lubrication governed by a variational inequality, {\it Appl. Math. Optim. } {\bf 3}, 1977, 227-242.

\bibitem{SC}S.A. Cohen, A free boundary problem in lubrication theory, {\it J. Math. Anal. Appl.}, (1987), 116-129.

\bibitem{F}H.F. Freiberger, A First Course in Partial Equations, Dover, New York, 1995.


\bibitem{KS}D. Kinderlehrer and G. Stampacchia, An Introduction to Variational Inequalities and their Applications, Academic Press, London, 1980.


\bibitem{PS} O. Pinkus and B. Sternlicht, Theory of Hydrodynamic Lubrication, McGraw-Hill Book Company, New York, 1961.


\end{thebibliography}

\end{document}